%% file: main.tex
\documentclass[a4paper,onecolumn,unpublished,11pt]{quantumarticle}
\pdfoutput=1
\usepackage[english]{babel}
\usepackage[numbers,sort&compress]{natbib}
\usepackage{amsmath}
\usepackage{amssymb}
\usepackage{amsthm}
\usepackage{graphicx}
\usepackage{xcolor}
\usepackage{mathrsfs}
\usepackage{enumitem}
\usepackage{csquotes}
\usepackage{braket}
\usepackage{mathtools}
\usepackage{booktabs}
\usepackage{thmtools}
\usepackage{enumitem}
\usepackage{algorithm}

\usepackage{algpseudocode}
\usepackage{verbatim}
\usepackage{soul}
\usepackage{apptools}
\usepackage[most]{tcolorbox}
\usepackage{xcolor}
\usepackage[bookmarks=true,colorlinks=true,linkcolor=teal,citecolor=teal,urlcolor=teal,bookmarksopen=true,bookmarksopenlevel=1]{hyperref}
\usepackage[capitalize]{cleveref}

\input{commands.tex}

\newtheorem{theorem}{Theorem}

\newtheorem{definition}[theorem]{Definition}

\newtheorem{remark}[theorem]{Remark}

\newcommand{\maxlin}{\textup{max-LINSAT}} 
\newcommand{\maxxor}{\textup{max-XORSAT}} 

\newcommand{\NP}{\mathsf{NP}}
\newcommand{\Pclass}{\mathsf{P}}
\newcommand{\PneqNP}{\Pclass \neq \NP}
\newcommand{\OPT}{\mathrm{OPT}}

\definecolor{max}{RGB}{164,12,52}

\usepackage[most]{tcolorbox}

\tcbset {
  base/.style={
    arc=0mm, 
    bottomtitle=0.5mm,
    boxrule=0mm,
    colbacktitle=black!10!white, 
    coltitle=black, 
    fonttitle=\bfseries, 
    left=2.5mm,
    leftrule=1mm,
    right=3.5mm,
    title={#1},
    toptitle=0.75mm, 
  }
}

\definecolor{brandblue}{rgb}{0.34, 0.7, 1}
\newtcolorbox{mainbox}[1]{
  colframe=brandblue, 
  base={#1}
}

\newtcolorbox{subbox}[1]{
  colframe=black!30!white,
  base={#1}
}

\newcommand{\dccqs}{Dahlem Center for Complex Quantum Systems, Freie Universit{\"a}t Berlin, 14195 Berlin, Germany}
\newcommand{\hzb}{Helmholtz-Zentrum Berlin f{\"u}r Materialien und Energie, 14109 Berlin, Germany}
\newcommand{\hhi}{Fraunhofer Heinrich Hertz Institute, 10587 Berlin, Germany}
\newcommand{\tub}{Electrical Engineering and Computer Science, Technische Universität Berlin, 10587 Berlin, Germany}

\begin{document}

\title{Tight inapproximability of max-LINSAT\newline and implications for decoded quantum interferometry}

    \author{Maximilian\ J.\ Kramer}
    \email{m.kramer@fu-berlin.de}
    \affiliation{\dccqs}

    \author{Carsten Schubert}
    \affiliation{\tub}

    \author{Jens\ Eisert}
    \affiliation{\dccqs}
    \affiliation{\hzb}
    \affiliation{\hhi}

\begin{abstract}
We establish tight inapproximability bounds for max-LINSAT, the problem of maximizing the number of satisfied linear constraints over the finite field $\mathbb{F}_q$, where each constraint accepts $r$ values.
Specifically, we prove by a direct reduction from H\r{a}stad's theorem that no polynomial-time algorithm can exceed the random-assignment ratio $r/q$ by any constant, assuming $\mathsf{P} \neq \mathsf{NP}$. 
This threshold coincides with the $\ell/m \to 0$ limit of the semicircle law governing decoded quantum interferometry (DQI), where $\ell$ is the decoding radius of the underlying code.
Together, these observations delineate the boundary between worst-case hardness and potential quantum advantage, showing that any algorithm surpassing $r/q$ must exploit instance structure beyond what is present in the hard instances produced by PCP reductions.
\end{abstract}

\maketitle

\section{Introduction}

Recent years have seen growing interest in quantum algorithms for combinatorial optimization~\cite{Abbas_2024}, particularly for problems with strong worst-case classical hardness. 
Given the practical and industrial relevance of these problems across a wide range of scheduling and routing tasks, the design of corresponding quantum algorithms is of considerable importance. For such problems, one cannot expect a quantum computer to solve or approximate all instances of the combinatorial optimization problem in polynomial time. This does not mean that one cannot hope that quantum computers will have computational advantages over classical computers. Indeed, in such cases, potential super-polynomial quantum advantages are often investigated on restricted or structured families of instances~\cite{Pirnay2024, Yamakawa2024, szegedy2022, buhrman2025formalframeworkquantumadvantage}. One such example that has recently been given a lot of attention is the \emph{decoded quantum interferometry} (DQI) algorithm introduced in Ref.~\cite{Jordan2024DQI}. 
DQI works by using the quantum Fourier transform to reduce the optimization problem to a decoding task, and then coherently applies an efficient decoding algorithm to amplify the quantum interference between satisfying assignments. The approximation ratio of DQI is characterized by a closed-form formula, coined the \emph{semicircle law}~\cite{Jordan2024DQI}.

DQI targets a generalized linear constraint satisfaction problem over finite fields, commonly referred to as \maxlin. The flagship application is the \emph{optimal polynomial intersection} (OPI) {problem}, a structured \maxlin{} instance whose constraint matrix has Vandermonde structure arising from Reed--Solomon codes that are efficiently decodable up to their designed error radius.
For OPI, in a certain parameter regime, DQI achieves a super-polynomial speed-up over \emph{known} classical algorithms.

A growing body of follow-up work has developed around DQI~\cite{chailloux2024softdecoders, Patamawisut2025, ralli2025DQI, bu2025DQInoise, briaud2025quantumadvantagesolvingmultivariate, sabater2025solvingindustrialintegerlinear, marwaha2025complexitydecodedquantuminterferometry, anschuetz2025DQI, piveteau2025_quantum_decoding, parekh2025DQI_maxcut, gu2025algebraicgeometrycodesdecoded, kothari2025exponentialquantumspeedupmathrmsisinfty, schmidhuber2025hamiltoniandecodedquantuminterferometry, khattar2025verifiablequantumadvantageoptimized, chailloux2025opixsoftdecoders, rosmanis2026nearlylineartimedecodedquantum, bu2026hamiltoniandecodedquantuminterferometry}.
Most relevant to our setting, the DQI advantage for OPI has been strengthened via soft decoding of Reed--Solomon codes~\cite{chailloux2024softdecoders, chailloux2025opixsoftdecoders} and extended to algebraic geometry codes via a generalization called HOPI~\cite{gu2025algebraicgeometrycodesdecoded}. At the same time, several works have exposed DQI's limitations \cite{parekh2025DQI_maxcut, anschuetz2025DQI, marwaha2025complexitydecodedquantuminterferometry}.

A prerequisite for precisely interpreting the results surrounding DQI is a clear understanding of the \emph{approximability limits} of the underlying optimization problem. For many well-studied constraint satisfaction problems, tight inapproximability thresholds are known, often matching the performance of a random assignment. For \maxlin{} in the formulation considered by DQI, which allows linear constraints with arbitrary acceptance sets over a finite field, no such worst-case guarantees have previously been established. Without such worst-case bounds, it is unclear whether the approximation ratios achieved by quantum or classical heuristics are close to being optimal, or whether substantially better polynomial-time algorithms might exist.

We address this gap in the present work by establishing \emph{tight inapproximability results} for \maxlin{} over finite fields. We consider \maxlin{} instances over a finite field\footnote{Recall that the finite field $\mathbb{F}_q$ of order $q$ exists whenever $q = p^\tau$ for some prime $p$ and integer $\tau\geq 1$.} $\mathbb{F}_q$ in which each constraint is satisfied whenever a linear form evaluates to one of $r$ allowed values, with $1 \le r \le q-1$. This class includes, as a special case, the linear constraint satisfaction problems studied in Ref.~\cite{Jordan2024DQI}. A random assignment satisfies each constraint with probability exactly $r/q$, yielding a natural baseline approximation ratio.

Our main result shows that this random baseline is \emph{provably optimal in the worst case}. Specifically, for every finite field $\mathbb{F}_q$ of order $q$, every $1 \le r \le q-1$, and every $\varepsilon > 0$, it is $\NP$-hard to distinguish between instances in which almost all constraints are satisfiable and instances in which no assignment satisfies more than a $(r/q+\varepsilon)$-fraction of constraints. Assuming $\Pclass \neq \NP$, no polynomial-time classical algorithm can approximate \maxlin{} consistently beyond the random-assignment threshold, and likewise no polynomial-time quantum algorithm can do so unless $\NP \subseteq \mathsf{BQP}$. From the perspective of quantum optimization, this provides a \emph{complexity-theoretic benchmark}. The threshold $r/q$ is hence a hard wall for worst-case \maxlin$(q,r)$: no efficient algorithm can consistently exceed it on arbitrary instances. 
This does not conflict with DQI's demonstrated advantage on OPI, which operates in a qualitatively different regime with constant $n/m$, and instances derived from classical Reed--Solomon codes that exhibit specific Vandermonde structure. For such instances, both DQI and classical algorithms achieve approximation ratios exceeding the random assignment baseline, whereas the hard instances produced by PCP-reductions have $n/m = o(1)$.
Rather, our result delineates a boundary: while structured instances may admit algorithms that beat the random-assignment threshold, no polynomial-time algorithm can do so on arbitrary \maxlin{} instances under reasonable complexity-theoretic assumptions.

Our proof builds on and generalizes an inapproximability result of H{\r{a}}stad for \textup{max-E}$3$-\textup{LIN} over finite Abelian groups, which establishes tight hardness for the case of single-valued acceptance sets~\cite{hastad2001}. We show how this hardness extends to \maxlin{} instances with uniform acceptance sets of size $r$, yielding a matching hardness threshold for the general problem. The resulting inapproximability bounds are tight.

\section{Preliminaries}
We start by defining the optimization problems considered in this work, in particular focusing on the \maxlin-problem as defined in Ref.~\cite{Jordan2024DQI}.
\begin{definition}[\maxlin{} over $\mathbb{F}_q$]
    Let $\mathbb{F}_q$ be the finite field of order $q$ and let $B \in \mathbb{F}_q^{m \times n}$. For each $i=1,\ldots,m$, let $F_i \subset \mathbb{F}_q$ be an arbitrary subset of $\mathbb{F}_q$, which yields a corresponding constraint $\sum_{j=1}^n B_{i, j} x_j \in F_i$. The \maxlin{} task is to find $\xx \in \mathbb{F}_q^n$ satisfying as many as possible of these $m$ constraints. We denote by $\OPT$ the maximum number of satisfiable constraints and define an approximation ratio $\alpha$ for a solution satisfying $s$ out of $m$ constraints to be $\alpha \coloneq s / m $.
\end{definition}
We assume that the acceptance sets $F_i$ are given as an explicit list.
\begin{definition}[\maxlin$(q,r)$]\label{def:max-linsat-qr}
We write \maxlin$(q,r)$ for the 
restriction of \maxlin{} over $\mathbb{F}_q$ to instances in which $|F_i|=r$ for every~$i$. The special case \maxxor~is \maxlin$(2,1)$: the field is $\mathbb{F}_2$ and every acceptance 
set is a singleton, so each constraint is a linear equation over~$\mathbb{F}_2$.
\end{definition}
We will relate this problem to \textup{max-E}$k$\textup{-LIN-}$q$ for some integer $k$ which has been studied more extensively in the literature.

\begin{definition}[\textup{max-E}$k$\textup{-LIN-}$q$]\label{def:ek-lin-q}
Let $\mathbb{F}_q$ be the finite field of order $q$ and let $k \ge 1$ be an integer. An instance of \textup{max-E}$k$\textup{-LIN-}$q$ consists of $n$ variables $x_1,\dots,x_n \in \mathbb{F}_q$ and $m$ linear equations
\begin{align}
\sum_{j \in S_i} x_j &= b_i,
\qquad i = 1,\dots,m,
\end{align}
where for each $i$, $S_i \subseteq \{1,\dots,n\}$ satisfies $|S_i| = k$, and $b_i \in \mathbb{F}_q$.
The goal is to find an assignment $\xx \in \mathbb{F}_q^n$ that satisfies the maximum number of equations.
\end{definition}
To establish worst-case limits on the approximation for \maxlin, we rely on a foundational result of H\r{a}stad~\cite[Theorem 5.9]{hastad2001}, which proves strong inapproximability for \textup{max-E}$3$-LIN-$\Gamma$ over finite Abelian groups. H\r{a}stad's result builds on the celebrated PCP theorem~\cite{AS98,ALMSS98}, which characterizes $\NP$ in terms of probabilistically checkable proofs and underlies most modern inapproximability results. This result provides the base case from which we extend hardness to \maxlin{} instances with uniform acceptance sets of arbitrary size.

\begin{theorem}[Inapproximability of \textup{max-E}$3$-LIN-$\Gamma$]
\label{thm:hastad}
    For every finite Abelian group~$\Gamma$ and every $\varepsilon>0$, it is $\NP$-hard to approximate \textup{max-E}$3$\textup{-LIN-}$\Gamma$ within a factor $|\Gamma|-\varepsilon$. Equivalently, \textup{max-E}$3$\textup{-LIN-}$\Gamma$ is non-approximable beyond the random assignment threshold: it is $\NP$-hard to distinguish, given an instance of \textup{max-E}$3$\textup{-LIN-}$\Gamma$, between
    \begin{itemize}
       \item[\textup{(Y)}] instances with $\OPT \ge (1-\varepsilon)\,m$ and
       \item[\textup{(N)}] instances with $\OPT \le (1/|\Gamma|+\varepsilon)\,m$.
    \end{itemize}
    
    In particular, setting $\Gamma=(\mathbb{F}_q, +)$ for a finite field $\mathbb{F}_q$ yields the $(1-\varepsilon,\;1/q+\varepsilon)$-hardness of \textup{max-E}$3$\textup{-LIN-}$q$.
\end{theorem}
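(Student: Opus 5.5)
This statement is Håstad's theorem~\cite[Theorem 5.9]{hastad2001}, which the paper cites rather than proves, so the plan is to recall the structure of Håstad's argument. The argument is a gap-preserving reduction from Label Cover, obtained from the PCP theorem together with Raz's parallel repetition theorem. For every $\delta>0$ there are label-set sizes $L<K$, depending only on $\delta$, such that it is $\NP$-hard to distinguish two cases for a bipartite projection game with projections $\pi_e\colon[K]\to[L]$: either the game is perfectly satisfiable, or no labeling satisfies more than a $\delta$-fraction of the edges.

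\textbf{The verifier.} Each vertex $u$ of the Label Cover instance is encoded by a table $A_u\colon\Gamma^{[L]}\to\Gamma$, and each vertex $w$ by a table $B_w\colon\Gamma^{[K]}\to\Gamma$. In the honest proof these tables are long codes, i.e.\ dictator functions $f\mapsto f(\ell)$. The verifier does the following.
\begin{itemize}
\item Pick a random edge $(u,w)$.
\item Pick uniform $f\in\Gamma^{[L]}$ and $g\in\Gamma^{[K]}$.
\item Pick a noise function $\mu\in\Gamma^{[K]}$ that is $0$ with probability $1-\varepsilon$ and uniform otherwise.
\item Set $h=-(f\circ\pi)-g+\mu$.
\item Accept iff $A_u(f)+B_w(g)+B_w(h)=0$.
\end{itemize}
Enforce folding ($A(f+c)=A(f)+c$ for constants $c$) by reading only one representative per coset. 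Each test is then an E3 linear equation over $\Gamma$. If the game is satisfiable, dictator encodings pass whenever $\mu$ vanishes at the chosen label, which gives completeness $\ge 1-\varepsilon$.

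\textbf{Soundness.} Expand acceptance using characters $\chi$ of $\Gamma$:
\[
\Pr[\text{accept}]=\frac{1}{|\Gamma|}\sum_{\chi}\mathbb{E}\bigl[\chi(A_u(f)+B_w(g)+B_w(h))\bigr].
\]
The term for the trivial character contributes exactly $1/|\Gamma|$. For each nontrivial $\chi$, Fourier-expand $\chi\circ A_u$ and $\chi\circ B_w$ over the group $\Gamma^{[L]}$, respectively $\Gamma^{[K]}$. Independence of $f$ and $g$ collapses the expectation to a sum of the form
\[
\sum_{\beta}\hat A_{\pi(\beta)}\,\hat B_\beta^2\,(1-\varepsilon)^{|\beta|}.
\]
Folding forces the sets $\beta$ to be nonempty. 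The noise factor suppresses large $|\beta|$.

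\textbf{Main obstacle: decoding a labeling.} If some nontrivial term exceeds $\varepsilon$, one must extract a labeling. Assign to $w$ a random label from $\beta$, chosen with probability $|\hat B_\beta|^2$. Assign to $u$ a random label from $\alpha$, chosen with probability $|\hat A_\alpha|^2$. Cauchy--Schwarz together with the bound $(1-\varepsilon)^{|\beta|}\le O(1/(\varepsilon|\beta|))$ then shows this randomized labeling satisfies an $\Omega(\varepsilon^{O(1)})$-fraction of edges. Choosing $\delta$ smaller than this quantity yields the contradiction.

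The technical difficulty over a general Abelian group, rather than $\mathbb{F}_2$, is bookkeeping. One must handle the characters of $\Gamma^{[K]}$ as $\Gamma$-valued vectors. One must also check that folding kills the $\beta=\emptyset$ terms for every nontrivial $\chi$ simultaneously. Both are routine once the characters are written as $\prod_j\chi_j(g(j))$.
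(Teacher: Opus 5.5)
Your proposal matches the paper, which invokes H\r{a}stad's Theorem~5.9 without proof, and your sketch is a faithful outline of H\r{a}stad's argument: label cover, folded $\Gamma$-long codes, the three-query test with noise $\mu$, and randomized decoding, giving completeness $1-\varepsilon$ and soundness $1/|\Gamma|+\delta$. One point should be stated precisely: the decoding needs the projected character $\pi(\beta)$, with $\pi(\beta)_x=\prod_{y\in\pi^{-1}(x)}\beta_y$, to be nontrivial, not merely $\beta\neq\emptyset$ (characters can cancel within a fiber, so the support of $\pi(\beta)$ can be strictly smaller than $\pi(\mathrm{supp}\,\beta)$), and it is the folding of $A_u$, which forces $\prod_x\alpha_x=\chi\neq 1$ whenever $\hat A_\alpha\neq 0$, that guarantees this.
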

The inapproximability guarantee in \cref{thm:hastad} is \emph{tight}. Indeed, for \textup{max-E}$3$\textup{-LIN-}$\Gamma$ over a finite Abelian group $\Gamma$, a random assignment satisfies each equation with probability exactly $1/|\Gamma|$. Therefore, any polynomial-time algorithm can trivially achieve an approximation ratio of $1/|\Gamma|$. H\r{a}stad's theorem shows that it is $\NP$-hard to exceed this baseline by more than an arbitrarily small constant $\varepsilon$, so the result matches the random-assignment threshold up to $\varepsilon$. No stronger worst-case hardness can hold under standard complexity assumptions, making the theorem optimal in this sense.

The inapproximability of \maxlin$(q,r)$ can also be situated within the broader theory of constraint satisfaction problems.
Under the \emph{unique games conjecture} (UGC)~\cite{Khot2002UGC}, Ref.~\cite{Raghavendra2005} showed that for every \emph{constraint satisfaction problem} (CSP), the optimal poly\-nomial-time approximation ratio equals the integrality gap of a canonical \emph{semi-definite programming} (SDP) relaxation; moreover, this ratio is achieved by an appropriate rounding of the SDP solution.
More concretely, Ref.~\cite{Austrin2008} has proven that any predicate whose satisfying assignments support a balanced pairwise-independent distribution is approximation resistant under the UGC. 
The \maxlin$(q,r)$ predicate for arity $k=3$ and coefficients being one, i.e., checking whether $x_{i_1}+x_{i_2}+x_{i_3}$ falls in an $r$-element subset $S \subseteq \mathbb{F}_q$, satisfies this criterion (as does the analogous predicate for any fixed $k \geq 3$), so UGC-hardness of beating $r/q$ follows immediately from their result.
Our \cref{thm:inapproximability_of_max_linsat} establishes the same conclusion under the weaker assumption $\PneqNP$, via a direct reduction from H\r{a}stad's theorem~\cite{hastad2001}. This is notable because the UGC remains disputed.
In a different direction, Ref.~\cite{Chan2016} proved unconditional ($\PneqNP$) approximation resistance for predicates whose satisfying assignments contain a pairwise-independent subgroup.
For prime~$q$, however, the only additive subgroups of~$\mathbb{F}_q$ are $\{0\}$ and $\mathbb{F}_q$ itself, so acceptance sets of size $1 \leq r \leq q-1$ are never subgroups, and this criterion does not apply to our setting. 
For prime power $q$, nontrivial additive subgroups do exist, but the criterion still does not cover generic acceptance sets of arbitrary size $r$.
Ref.~\cite{Engebretsen2004} extended H\r{a}stad's \textup{max-E}$3$-\textup{LIN} hardness from Abelian to all finite groups, and Ref.~\cite{Butti2025} recently transferred these inapproximability results to the setting of \emph{promise constraint satisfaction problems}. 
These results may prove useful for establishing inapproximability 
bounds for potential DQI variants over 
more general finite groups.

\section{Results}
\begin{theorem}[Inapproximability of \maxlin$(q,r)$]
\label{thm:inapproximability_of_max_linsat}
For every finite field $\mathbb{F}_q$, every integer $1\leq r \leq q-1$, and every $\varepsilon>0$, it is $\NP$-hard to distinguish, given an instance of \maxlin$(q,r)$, between
\begin{itemize}
\item[\textup{(Y)}] instances with $\OPT \ge (1-\varepsilon)\,m$ and
\item[\textup{(N)}] instances with $\OPT \le (r/q+\varepsilon)\,m$.
\end{itemize}
This means it is $\NP$-hard to approximate within a factor strictly better than $r/q+\varepsilon$ for every $\varepsilon>0$.
\end{theorem}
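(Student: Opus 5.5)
The plan is a direct gap-preserving reduction from \cref{thm:hastad} with $\Gamma=(\mathbb{F}_q,+)$. Fix $q$, $r$, $\varepsilon$ and a set $S\subseteq\mathbb{F}_q$ with $|S|=r$ and $0\in S$. Given a \textup{max-E}$3$\textup{-LIN-}$q$ instance with equations $\langle a_i,\xx\rangle=b_i$, $i=1,\dots,m$, I build a \maxlin$(q,r)$ instance on the same variables. For each $i$ and each $c\in\mathbb{F}_q^{*}$, it has the constraint
\begin{align}
\sum_{j} (c\,a_{i,j})\,x_j \in F_{i,c} \coloneq c\,b_i + S .
\end{align}
This gives $m'=(q-1)m$ constraints, each with an explicitly listed acceptance set of size exactly $r$, so the instance has the required form. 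The construction is clearly polynomial time for fixed $q$.

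\emph{Completeness.} If $\xx$ satisfies equation $i$, then $c\langle a_i,\xx\rangle - c b_i = 0\in S$ for every $c$. Hence all $q-1$ copies are satisfied. So $\OPT\ge(1-\varepsilon)m$ in the source instance gives $\OPT'\ge(1-\varepsilon)m'$.

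\emph{Soundness.} This is the step where a naive reduction fails. Simply enlarging $\{b_i\}$ to an $r$-set would let a cheating assignment collect constraints through the extra accepted values, and nothing controls how often that happens. Averaging over all nonzero scalings $c$ neutralizes this. Fix any $\xx$ and set $v_i=\langle a_i,\xx\rangle-b_i$. Constraint $(i,c)$ is satisfied iff $c v_i\in S$.
\begin{itemize}
\item If $v_i=0$, all $q-1$ copies are satisfied.
\item If $v_i\neq 0$, the map $c\mapsto cv_i$ is a bijection of $\mathbb{F}_q^{*}$. So exactly $|S\setminus\{0\}|=r-1$ copies are satisfied, for a fraction $(r-1)/(q-1)$, independent of $v_i$.
\end{itemize}
If $p$ is the fraction of source equations satisfied by $\xx$, the satisfied fraction of the new instance is therefore exactly
\begin{align}
p + (1-p)\frac{r-1}{q-1}.
\end{align}
This is nondecreasing in $p$ because $(r-1)/(q-1)<1$. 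Plugging in $p\le 1/q+\varepsilon$ gives at most
\begin{align}
\frac1q+\varepsilon+\Bigl(1-\frac1q\Bigr)\frac{r-1}{q-1}=\frac rq+\varepsilon .
\end{align}
Thus (N) maps to (N) with the same $\varepsilon$, and (Y) maps to (Y).

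Since \cref{thm:hastad} holds for every $\varepsilon>0$, so does the conclusion. Distinguishing the two cases for \maxlin$(q,r)$ is therefore $\NP$-hard, and the approximation statement follows as usual. The only real obstacle was this soundness issue, and the symmetrization over $\mathbb{F}_q^{*}$ resolves it exactly. The only fact used is that $\mathbb{F}_q$ is a field, so multiplication by $c\neq0$ is a bijection.
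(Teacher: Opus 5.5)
Your proof is correct, but you symmetrize differently from the paper. The paper replaces each equation $L_i(\xx)=b_i$ by all $\binom{q-1}{r-1}$ constraints $L_i(\xx)\in T$, with $T$ ranging over the $r$-subsets of $\mathbb{F}_q$ that contain $b_i$. It gets the fraction $(r-1)/(q-1)$ for violated equations from the count $\binom{q-2}{r-2}/\binom{q-1}{r-1}$. You use only the $q-1$ scalings of one fixed $S\ni 0$, and get the same fraction because $\mathbb{F}_q^{*}$ acts simply transitively on $\mathbb{F}_q\setminus\{0\}$.

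Your constraint $(i,c)$ is equivalent to $\langle a_i,\xx\rangle\in b_i+c^{-1}S$. So your instance is really a (multi)family of only $q-1$ of the paper's sets, and this family is already balanced: each value other than $b_i$ lies in exactly $r-1$ of them.

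\textbf{What each route buys.} Yours gives a smaller blow-up, $q-1$ rather than $\binom{q-1}{r-1}$ copies per equation. This is immaterial for fixed $q$, but it would stay polynomial if $q$ grew. The paper's route is purely combinatorial and never multiplies, so it transfers H\r{a}stad's theorem verbatim to any finite abelian group $\Gamma$. It also leaves H\r{a}stad's constraint matrix untouched, which underpins the paper's remark that hard instances have $\{0,1\}$ coefficients with three ones per row. Your scaling by $c$ destroys this as written, though the rewriting $\langle a_i,\xx\rangle\in b_i+c^{-1}S$ restores it.

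\textbf{A minor point.} Your final display drops a $-\varepsilon(r-1)/(q-1)$ term, so it is an upper bound rather than the exact value at $p=1/q+\varepsilon$. The exact value is $r/q+\varepsilon(q-r)/(q-1)$, as in the paper; the conclusion is unaffected.
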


\begin{proof}
The case $r=1$ follows directly from \cref{thm:hastad}, since every instance of \textup{max-E}$3$\textup{-LIN-}$q$ is syntactically an instance of \maxlin$(q,1)$ with coefficients $B_{i,j}$ being one if and only if the variable is involved: each equation $\sum_{j \in S_i} x_j = b_i$ defines a constraint with $F_i = \{b_i\}$.

For general $r \geq 2$, we give a deterministic polynomial-time reduction from \maxlin$(q,1)$ to \maxlin$(q,r)$. Let $\phi$ be an instance of \maxlin$(q,1)$ with $m$ constraints $L_i(\xx) = b_i$ for $i = 1, \dots, m$, where each $L_i$ is a linear form over $\mathbb{F}_q$. We construct an instance $\phi'$ of \maxlin$(q,r)$ on the same set of $n$ variables as follows. For each constraint $i$, we enumerate all $r$-element subsets of $\mathbb{F}_q$ that contain~$b_i$. There are exactly $\binom{q-1}{r-1}$ such subsets, since we choose the remaining $r-1$ elements from $\mathbb{F}_q \setminus \{b_i\}$. For each such subset $S$, we add the constraint $L_i(\xx) \in S$ to $\phi'$. The resulting instance $\phi'$ has $m' = m \cdot \binom{q-1}{r-1}$ constraints, each with acceptance set of size exactly~$r$. Since $q$ is a fixed constant, $m'$ is polynomial in $m$.

\emph{Completeness.}
If $\xx$ satisfies $L_i(\xx) = b_i$, then $L_i(\xx) \in S$ for every $r$-element subset $S$ containing $b_i$, so all $\binom{q-1}{r-1}$ derived constraints from constraint $i$ are satisfied. Therefore, $\OPT(\phi') \geq \OPT(\phi) \cdot \binom{q-1}{r-1}$. In the (Y)-case, $\OPT(\phi) \geq (1-\varepsilon)\,m$, so that
\begin{equation}
    \OPT(\phi') \geq (1-\varepsilon)\,m'.
\end{equation}
\emph{Soundness.}
Fix any assignment $\xx \in \mathbb{F}_q^n$ and define
\begin{align}
    A(\xx) &= \{ i : L_i(\xx) = b_i \}, \\
    \overline{A}(\xx) &= \{ i : L_i(\xx) \neq b_i \}.
\end{align}
For each $i \in A(\xx)$, all $\binom{q-1}{r-1}$ derived constraints are satisfied. For each $i \in \overline{A}(\xx)$, let $c_i = L_i(\xx) \neq b_i$. The derived constraint $L_i(\xx) \in S$ is satisfied if and only if $c_i \in S$. Since $S$ must contain $b_i$, the number of $r$-element subsets of $\mathbb{F}_q$ containing both $b_i$ and $c_i$ is $\binom{q-2}{r-2}$ (choosing the remaining $r-2$ elements from $\mathbb{F}_q \setminus \{b_i, c_i\}$). Thus, for each $i \in \overline{A}(\xx)$, exactly $\binom{q-2}{r-2}$ out of $\binom{q-1}{r-1}$ derived constraints are satisfied. Consequently, the fraction of derived constraints satisfied per unsatisfied original constraint is
\begin{align}
    \frac{\binom{q-2}{r-2}}{\binom{q-1}{r-1}} = \frac{r-1}{q-1}.
\end{align}
Writing $\mu \coloneqq |A(\xx)|/m$ for the fraction of original constraints satisfied exactly, the total fraction of constraints in $\phi'$ satisfied by $\xx$ is
\begin{align}
    \mu \cdot 1 + (1-\mu) \cdot \frac{r-1}{q-1}
    = \mu \cdot \frac{q-r}{q-1} + \frac{r-1}{q-1}.
\end{align}
In the (N)-case, \cref{thm:hastad} guarantees $\mu \leq 1/q + \varepsilon$ for every assignment. By substituting this and using the identity
\begin{align}
    \frac{1}{q} \cdot \frac{q-r}{q-1} + \frac{r-1}{q-1}
    = \frac{(q-r) + q(r-1)}{q(q-1)}
    = \frac{r}{q},
\end{align}
we obtain
\begin{align}
    \mu \cdot \frac{q-r}{q-1} + \frac{r-1}{q-1}
    \leq \left(\frac{1}{q} + \varepsilon\right) \frac{q-r}{q-1} + \frac{r-1}{q-1} = \frac{r}{q} + \varepsilon \cdot \frac{q-r}{q-1}.
\end{align}
Since $(q-r)/(q-1) \leq 1$, this is at most $r/q + \varepsilon$. Therefore, $\OPT(\phi') \leq (r/q + \varepsilon)\,m'$, completing the soundness analysis.
\end{proof}

\begin{remark}[Tightness of \cref{thm:inapproximability_of_max_linsat}]\label{cor:tight}
The threshold $r/q$ in \cref{thm:inapproximability_of_max_linsat} is tight. Uniformly chosen assignments $\xx \in\mathbb{F}_q^n$ satisfy each constraint~$i$ with probability exactly $|F_i|/q = r/q$, giving an expected approximation ratio of~$r/q$. A deterministic polynomial-time algorithm achieving a ratio of at least~$r/q$ is obtained by the method of conditional expectations.
\end{remark}

\begin{remark}[Coefficient-free hard instances]\label{rem:coefficient-free}
The hard instances produced by the reduction have a particularly simple form: the constraint matrix~$B$ has entries in~$\{0,1\}$ only, with exactly three ones per row. This follows from the fact that H\r{a}stad's construction yields coefficient-free equations $x_{i_1} + x_{i_2} + x_{i_3} = b_i$. The inapproximability of \maxlin$(q,r)$ therefore already holds for this restricted class of instances, which is a stronger statement than hardness for general constraint matrices.
\end{remark}

\begin{remark}[\maxxor]\label{rem:xorsat}
Setting $q=2$ and $r=1$ in \cref{thm:inapproximability_of_max_linsat} recovers the result from Ref.~\cite{hastad2001} stating that \maxxor – equivalently, \textup{max-E}$3$\textup{-LIN-}$2$ – is $\NP$-hard to approximate within $1/2+\varepsilon$ for every $\varepsilon>0$.
\end{remark}

\section{Discussion}

In this work, we have established tight inapproximability results for a family of optimization problems that feature strongly in the \emph{decoded quantum interferometry} (DQI) algorithm~\cite{Jordan2024DQI}, which has been proposed to offer a speedup over classical approaches. \Cref{thm:inapproximability_of_max_linsat} establishes that no polynomial-time algorithm can approximate the problem \maxlin$(q,r)$ beyond the random assignment threshold $r/q$ on worst-case instances, assuming $\Pclass \neq \NP$, and likewise no polynomial-time quantum algorithm can do so unless $\NP \subseteq \mathsf{BQP}$. In light of the fact that this kind of problem is central to DQI, we now discuss the implications of this result for evaluating quantum optimization algorithms by resorting to DQI and beyond.

A key observation is that DQI's quantum advantage claims are \emph{not} for arbitrary \maxlin{} instances, but for \emph{structured problems}, most notably \emph{optimal polynomial intersection} (OPI), with constant $n/m$ and a constraint matrix that has Vandermonde structure arising from classical Reed--Solomon codes, which DQI exploits via coherent implementation of the Berlekamp--Massey decoding algorithm~\cite{Jordan2024DQI}. For OPI with parameters $n/q = r/q = 1/2$, DQI achieves an approximation ratio of approximately $0.933$ via the \emph{semicircle law}, while the best known classical algorithm (Prange's information set decoding \cite{Prange}) achieves only ${\sim}0.55$~\cite{Jordan2024DQI}. These insights have recently been extended to algebraic geometry codes in Ref.~\cite{gu2025algebraicgeometrycodesdecoded}, where a problem called HOPI is studied, which generalizes OPI. Therefore, DQI targets structured instances, not worst-case \maxlin.

Our worst-case inapproximability result is entirely consistent with these claims: \cref{thm:inapproximability_of_max_linsat} applies to worst-case \maxlin$(q,r)$ instances, whereas OPI operates in the structured regime described above.
The hardness reduction from \textup{max-E}$3$\textup{-LIN-}$q$ produces instances with $n/m = o(1)$ and no such \emph{exploitable} structure.

In DQI, the optimization problem is mapped to a decoding task for a classical error-correcting code. A central parameter governing the algorithm's performance is the \emph{decoding radius}~$\ell$, defined as the maximum number of errors that an efficient decoder can correct. 
If a \maxlin{} instance with $m$ constraints gives rise to a dual code $C^\perp$ that can be decoded from up to $\ell$ errors, then DQI exploits this decodability to prepare a degree-$\ell$ polynomial enhancement of the objective function in superposition.
The resulting approximation ratio is given by the \emph{semicircle law}~\cite{Jordan2024DQI},
\begin{align}\label{eq:semicircle_discussion}
  \alpha_{\mathrm{DQI}}
  = \left(
      \sqrt{\frac{\ell}{m}\!\left(1 - \frac{r}{q}\right)}
    + \sqrt{\frac{r}{q}\!\left(1 - \frac{\ell}{m}\right)}
  \right)^{\!2}
\end{align}
if $r/q \leq 1 - \ell/m$, and $\alpha_{\mathrm{DQI}} = 1$ otherwise. This relationship is illustrated in \cref{fig:dqi_inapprox}.
The ratio $\ell/m$ therefore quantifies the amount of decodable structure available to the algorithm. When $\ell/m$ is bounded away from zero, the decoder can correct a constant fraction of errors, and DQI achieves approximation ratios strictly above the random-assignment baseline~$r/q$. In the limit $\ell/m \to 0$, the semicircle law yields $\alpha_{\mathrm{DQI}} \to r/q$: without sufficient decoding capability, DQI reduces to a random assignment.

\begin{figure}
    \centering
    \includegraphics[width=0.7\linewidth]{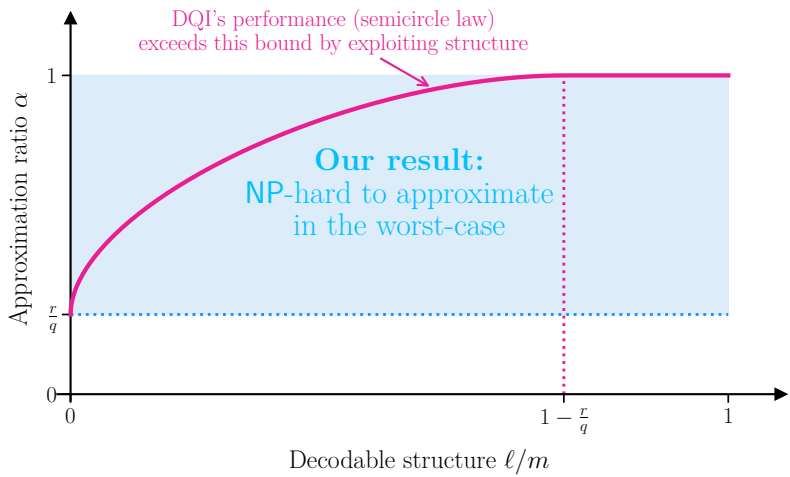}
    \caption{Landscape of approximability for $\mathrm{max\text{-}LINSAT}(q,r)$. The x-axis parametrizes the decodable structure $\ell/m$, where $\ell$ is the decoding radius of the underlying error-correcting code and $m$ is the number of constraints.
    For worst-case instances, by virtue of \cref{thm:inapproximability_of_max_linsat}, approximation ratios within the blue-shaded region are unattainable.  
    Importantly, this hardness statement does not depend on the parameter $\ell/m$: the x-axis is only meaningful for structured instances arising from decoding problems (such as those targeted by DQI). The shading therefore illustrates a worst-case approximation barrier rather than a property that varies with $\ell/m$.
    The pink curve depicts DQI's approximation ratio as given by the semicircle law~\cite{Jordan2024DQI}, which surpasses the random-assignment bound $r/q$ whenever $\ell/m > 0$ and saturates at $\alpha_{\mathrm{DQI}} = 1$ for $\ell/m \geq 1 - r/q$. In the limit $\ell/m \to 0$, DQI's performance degrades to exactly $r/q$, matching the worst-case barrier.}
    \label{fig:dqi_inapprox}
\end{figure}

We note that on random \maxlin$(q,r)$ instances with $n$ variables and $m$ constraints, simple algorithms can exceed the $r/q$ threshold. For example Prange's information set decoding~\cite{Prange} solves an $n \times n$ subsystem exactly and achieves a total expected approximation ratio of $n/m + (1 - n/m) \cdot r/q$ for random instances. This does not conflict with \cref{thm:inapproximability_of_max_linsat}: the formula is a heuristic expected-value analysis that assumes the residual $m - n$ constraints behave as random evaluations~\cite[Section~11.3]{Jordan2024DQI}, an assumption that fails on worst-case instances. Moreover, for any fixed $\varepsilon > 0$, the PCP-based reduction produces instances in which $n/m$ can be driven below~$\varepsilon$, so Prange's advantage vanishes in exactly the regime where the hardness applies.

This observation is complemented by \cref{thm:inapproximability_of_max_linsat}, which shows that on worst-case \maxlin$(q,r)$ instances, no polynomial-time algorithm can exceed the $r/q$ threshold (assuming $\mathrm{P} \neq \mathrm{NP}$). While these two statements operate in different regimes -- the semicircle law describes DQI's performance on code-derived instances as a function of the decoding radius, and \cref{thm:inapproximability_of_max_linsat} applies to adversarially constructed instances with no exploitable structure -- they point to the same conclusion from different directions: $r/q$ is the natural worst-case approximation barrier for \maxlin, and surpassing it requires both an algorithm capable of exploiting structure and instances that exhibit such structure. The codes for which DQI demonstrates an advantage over known classical algorithms, namely Reed--Solomon codes in OPI~\cite{Jordan2024DQI} and algebraic geometry codes in HOPI~\cite{gu2025algebraicgeometrycodesdecoded}, satisfy these conditions.

It is suggestive to read this connection in coding-theoretic terms. For code families where efficient decoding beyond a vanishing fraction of errors is believed to be computationally hard -- such as generic linear codes specified by arbitrary parity-check matrices, for which bounded-distance decoding is $\NP$-complete~\cite{berlekamp1978} -- one would expect the associated \maxlin{} instances to be similarly hard to approximate beyond~$r/q$, since a sufficiently good approximate solution would be expected to reveal information about nearby codewords. Making this intuition precise by exhibiting a formal reduction from approximating code-derived \maxlin{} instances to decoding remains an interesting open question.

Recent work has further clarified that DQI's advantage, where it exists, depends critically on problem structure:
Ref.~\cite{parekh2025DQI_maxcut} shows that for MaxCut instances where DQI achieves any nontrivial asymptotic approximation guarantee, the optimal solution can be found \emph{exactly} in classical polynomial time. DQI consequently provides no advantage for MaxCut.
For unstructured max-$k$-XOR-SAT instances, Ref.~\cite{anschuetz2025DQI} proves that DQI is obstructed by the overlap gap property, which is a topological barrier in the near-optimal solution space. Numerical evidence suggests that classical approximate message passing outperforms DQI on such instances.
From the perspective of computational complexity, Ref.~\cite{marwaha2025complexitydecodedquantuminterferometry} places DQI in $\mathsf{BPP}^{\mathsf{NP}}$, thereby ruling out immediate quantum advantage via ``supremacy''-style hardness 
arguments as they are used in quantum random sampling \cite{SupremacyReview,Boixo}. There, it is $\# \Pclass$-hard to compute individual output probabilities, and an efficient classical sampling algorithm up to a constant error in the total variation distance would lead to a collapse of the polynomial hierarchy to the third level. 
In contrast, DQI is already placed in a low level of the polynomial hierarchy. The hardness of DQI, if any, arises from locating elements in an exponentially large hidden subset without group structure and not from the hardness of simulating the quantum circuit itself, as here, in contrast to the situation in quantum random sampling, individual probabilities of outcomes can be efficiently computed.

Together with \cref{thm:inapproximability_of_max_linsat}, these results rigorously establish that DQI is not a general-purpose quantum optimizer. Rather, any possible quantum advantage over (known) classical optimizers crucially depends on structures that are present in the instances.
Several natural questions remain:
\begin{itemize}
    \item \textbf{Inapproximability for OPI and HOPI:}
    Can inapproximability results be extended to settings studied in OPI~\cite{Jordan2024DQI}? 
    The hard instances produced by our reduction have $n/m = o(1)$ and unstructured constraint matrices, so they do not directly address the Vandermonde-structured instances  with constant $n/m$ targeted by DQI. Since maximum-likelihood decoding of Reed--Solomon codes is $\NP$-hard for certain parameters~\cite{RS_nphard}, it is natural to ask whether PCP-based methods can be adapted to establish worst-case inapproximability for OPI. The same question extends to the HOPI generalization of Ref.~\cite{gu2025algebraicgeometrycodesdecoded}.
    
    \item \textbf{Average-case hardness:} Our result is worst-case. Establishing average-case inapproximability for natural distributions over \maxlin{} instances as they occur in DQI would provide stronger benchmarks for heuristic quantum algorithms. 

    \item \textbf{Heterogeneous acceptance sets:} Throughout this work (and also in Ref.~\cite{Jordan2024DQI}) the acceptance sets satisfy $|F_i|=r$ uniformly. If the cardinalities $|F_i|$ are allowed to vary across constraints, a random assignment satisfies constraint~$i$ with probability $|F_i|/q$, so the natural baseline is no longer a single reference value but depends on the distribution of the $|F_i|$. In particular, instances with small acceptance sets ($|F_i|=1$) are the most restrictive locally, yet mixtures of small and large sets may exhibit different global approximation behavior. What is a useful notion of a tight inapproximability threshold in this heterogeneous setting? Should it be governed by the minimum $|F_i|$, the average acceptance probability $\frac{1}{m}\sum_i |F_i|/q$, or a more refined structural parameter? Determining an optimal approximation guarantee as a function of the profile $(|F_i|)_{i=1}^m$ thus remains open.
\end{itemize}

Our \cref{thm:inapproximability_of_max_linsat} provides a complexity-theoretic benchmark for \maxlin, establishing $r/q$ as the provable worst-case approximation limit and confirming that any improvement must arise from instance structure not present in the hard instances produced by PCP reductions.
It is our hope that this result contributes to clarifying the landscape of quantum advantage for optimization problems of practical relevance~\cite{MindTheGaps,VastWorld,Abbas_2024,babbush2025grandchallengequantumapplications}.

\section*{Acknowledgements}
This work has been inspired by a recent Computer Science Stack Exchange post~\cite{csse_maxlinsat_dqi}, which has motivated us to work out the inapproximability of \maxlin{} in full.
The authors warmly thank Elies Gil-Fuster for helpful discussions and feedback on this work, and Noah Shutty for a string of highly helpful comments on an earlier version of this manuscript. Moreover, the authors acknowledge support by the BMFTR (PraktiQOM, QuSol, HYBRID++, PQ-CCA),
the Munich Quantum Valley, Berlin Quantum, the Quantum Flagship (Millenion, PasQuans2), the DFG (CRC 183, SPP 2514), the Clusters of Excellence (MATH+, ML4Q), and the European Research Council (DebuQC).

\bibliographystyle{quantum}
\bibliography{main}
\end{document}

%% file: commands.tex
\providecommand{\myvec}[1]{\ensuremath{\boldsymbol{#1}}}

\providecommand{\xx}{\ensuremath{\myvec{x}}}



%% file: main.bib
@article{Jordan2024DQI,
   title={Optimization by decoded quantum interferometry},
   volume={646},
   ISSN={1476-4687},
   url={http://doi.org/10.1038/s41586-025-09527-5},
   DOI={10.1038/s41586-025-09527-5},
   number={8086},
   journal={Nature},
   publisher={Springer Science and Business Media LLC},
   author={Jordan, Stephen P. and Shutty, Noah and Wootters, Mary and Zalcman, Adam and Schmidhuber, Alexander and King, Robbie and Isakov, Sergei V. and Khattar, Tanuj and Babbush, Ryan},
   year={2025},
   month=oct, pages={831–836} }

@article{khattar2025verifiablequantumadvantageoptimized,
      title={{Verifiable quantum advantage via optimized DQI circuits}}, 
      author={Tanuj Khattar and Noah Shutty and Craig Gidney and Adam Zalcman and Noureldin Yosri and Dmitri Maslov and Ryan Babbush and Stephen P. Jordan},
      year={2025},
      eprint={2510.10967},
      archivePrefix={arXiv},
      journal = {arXiv preprint arXiv:2510.10967},
      url={https://arxiv.org/abs/2510.10967}, 
}

@article{piveteau2025_quantum_decoding,
      title={Efficient and optimal quantum state discrimination via quantum belief propagation}, 
      author={Christophe Piveteau and Joseph M. Renes},
      year={2025},
      eprint={2509.19441},
      archivePrefix={arXiv},
      journal = {arXiv preprint arXiv:2509.19441},
      url={https://arxiv.org/abs/2509.19441}, 
}

@article{briaud2025quantumadvantagesolvingmultivariate,
      title={Quantum Advantage via Solving Multivariate Polynomials}, 
      author={Pierre Briaud and Itai Dinur and Riddhi Ghosal and Aayush Jain and Paul Lou and Amit Sahai},
      year={2025},
      eprint={2509.07276},
      archivePrefix={arXiv},
      journal = {arXiv preprint arXiv:2509.07276},
      url={https://arxiv.org/abs/2509.07276}, 
}

@article{sabater2025solvingindustrialintegerlinear,
      title={Towards solving industrial integer linear programs with Decoded Quantum Interferometry}, 
      author={Francesc Sabater and Ouns El Harzli and Geert-Jan Besjes and Marvin Erdmann and Johannes Klepsch and Jonas Hiltrop and Jean-Francois Bobier and Yudong Cao and Carlos A. Riofrio},
      year={2025},
      eprint={2509.08328},
      archivePrefix={arXiv},
      journal = {arXiv preprint arXiv:2509.08328},
      url={https://arxiv.org/abs/2509.08328}, 
}

@article{chailloux2024softdecoders,
      title={Quantum advantage from soft decoders}, 
      author={André Chailloux and Jean-Pierre Tillich},
      year={2024},
      eprint={2411.12553},
      archivePrefix={arXiv},
      journal = {arXiv preprint arXiv:2411.12553},
      url={https://arxiv.org/abs/2411.12553}, 
}

@inproceedings{Patamawisut2025,
   title={Quantum Circuit Design for Decoded Quantum Interferometry},
   url={http://doi.org/10.1109/QCE65121.2025.00041},
   DOI={10.1109/qce65121.2025.00041},
   booktitle={2025 IEEE Int.l Conf. Quant. Comp. Eng. (QCE)},
   publisher={IEEE},
   author={Patamawisut, Natchapol and Benchasattabuse, Naphan and Hajdušek, Michal and Van Meter, Rodney},
   year={2025},
   month=aug, pages={291–301} }

@article{bu2025DQInoise,
      title={Decoded Quantum Interferometry Under Noise}, 
      author={Kaifeng Bu and Weichen Gu and Dax Enshan Koh and Xiang Li},
      year={2025},
      eprint={2508.10725},
      archivePrefix={arXiv},
      journal = {arXiv preprint arXiv:2508.10725},
      url={https://arxiv.org/abs/2508.10725}, 
}

@article{ralli2025DQI,
author={Ralli, Alexis
and Weaving, Tim
and Coveney, Peter V.
and Love, Peter J.},
title={{Bridging quantum chemistry and MaxCut: Classical performance guarantees and quantum algorithms for the Hartree--Fock method}},
journal={J. Chem. Th. Comp.},
year={2025},
month={Oct},
day={14},
publisher={American Chemical Society},
volume={21},
number={19},
pages={9511-9524},
issn={1549-9618},
doi={10.1021/acs.jctc.5c00948},
url={https://doi.org/10.1021/acs.jctc.5c00948}
}

@article{berlekamp1978, title={On the inherent intractability of certain coding problems}, volume={24}, ISSN={0018-9448}, DOI={10.1109/TIT.1978.1055873}, abstractNote={The fact that the general decoding problem for linear codes and the general problem of finding the weights of a linear code are both NP-complete is shown. This strongly suggests, but does not rigorously imply, that no algorithm for either of these problems which runs in polynomial time exists.}, number={3}, journal={IEEE Trans. Inf. Th.}, publisher={IEEE}, author={Berlekamp, Elwyn R. and McEliece, Robert J. and van Tilborg, Henk C. A.}, year={1978}, month={May}, pages={384–386} }

@article{hastad2001, author = {H\r{a}stad, Johan}, title = {Some optimal inapproximability results}, year = {2001}, issue_date = {July 2001}, publisher = {Association for Computing Machinery}, address = {New York, NY, USA}, volume = {48}, number = {4}, issn = {0004-5411}, url = {https://doi.org/10.1145/502090.502098}, doi = {10.1145/502090.502098}, journal = {J. ACM}, month = jul, pages = {798–859}, numpages = {62} }

@misc{csse_maxlinsat_dqi,
  author={Kumar, Manish},
  title={{Hardness of approximation for Max-LinSAT mentioned in DQI paper}},
  year={2025},
  howpublished={\href{https://cs.stackexchange.com/questions/174832}{Computer Science Stack Exchange}},
  url={https://cs.stackexchange.com/questions/174832}
}

@article{marwaha2025complexitydecodedquantuminterferometry,
      title={On the Complexity of Decoded Quantum Interferometry}, 
      author={Kunal Marwaha and Bill Fefferman and Alexandru Gheorghiu and Vojtech Havlicek},
      year={2025},
      eprint={2509.14443},
      archivePrefix={arXiv},
      journal = {arXiv preprint arXiv:2509.14443},
      url={https://arxiv.org/abs/2509.14443}, 
}

@article{Boixo,
title={Characterizing quantum supremacy in near-term devices},
doi={10.1038/s41567-018-0124-x},
author={Sergio Boixo and Sergei V. Isakov and Vadim N. Smelyanskiy and Ryan Babbush and Nan Ding and Zhang Jiang and Michael J. Bremner and John M. Martinis and Hartmut Neven},
journal={Nature Phys.},volume= 14, pages={595-600}, year= 2018}

@article{SupremacyReview,
  title = {Computational advantage of quantum random sampling},
  author = {Hangleiter, Dominik and Eisert, Jens},
  journal = {Rev. Mod. Phys.},
  volume = {95},
  pages = {035001},
  year = {2023},
  publisher = {American Physical Society},
  doi = {10.1103/RevModPhys.95.035001}
}

@article{anschuetz2025DQI,
      title={Decoded Quantum Interferometry Requires Structure}, 
      author={Eric R. Anschuetz and David Gamarnik and Jonathan Z. Lu},
      year={2025},
      eprint={2509.14509},
      archivePrefix={arXiv},
      journal = {arXiv preprint arXiv:2509.14509},
      url={https://arxiv.org/abs/2509.14509}, 
}

@article{parekh2025DQI_maxcut,
      title={No Quantum Advantage in Decoded Quantum Interferometry for MaxCut}, 
      author={Ojas Parekh},
      year={2025},
      eprint={2509.19966},
      archivePrefix={arXiv},
      journal = {arXiv preprint arXiv:2509.19966},
      url={https://arxiv.org/abs/2509.19966}, 
}

@article{gu2025algebraicgeometrycodesdecoded,
      title={Algebraic Geometry Codes and Decoded Quantum Interferometry}, 
      author={Andi Gu and Stephen P. Jordan},
      year={2025},
      eprint={2510.06603},
      journal = {arXiv preprint arXiv:2510.06603},
      url={https://arxiv.org/abs/2510.06603}, 
}

@article{bu2026hamiltoniandecodedquantuminterferometry,
      title={{Hamiltonian decoded quantum interferometry for general Pauli Hamiltonians}}, 
      author={Kaifeng Bu and Weichen Gu and Xiang Li},
      year={2026},
      eprint={2601.18773},
      archivePrefix={arXiv},
      journal = {arXiv preprint arXiv:2601.18773},
      url={https://arxiv.org/abs/2601.18773}, 
}

@article{schmidhuber2025hamiltoniandecodedquantuminterferometry,
      title={Hamiltonian Decoded Quantum Interferometry}, 
      author={Alexander Schmidhuber and Jonathan Z. Lu and Noah Shutty and Stephen Jordan and Alexander Poremba and Yihui Quek},
      year={2025},
      eprint={2510.07913},
      archivePrefix={arXiv},
      journal = {arXiv preprint arXiv:2510.07913},
      url={https://arxiv.org/abs/2510.07913}, 
}

@article{kothari2025exponentialquantumspeedupmathrmsisinfty,
      title={No exponential quantum speedup for $\mathrm{SIS}^\infty$ anymore}, 
      author={Robin Kothari and Ryan O'Donnell and Kewen Wu},
      year={2025},
      eprint={2510.07515},
      archivePrefix={arXiv},
      journal = {arXiv preprint arXiv:2510.07515},
      url={https://arxiv.org/abs/2510.07515}, 
}

@article{rosmanis2026nearlylineartimedecodedquantum,
      title={A nearly linear-time Decoded Quantum Interferometry algorithm for the Optimal Polynomial Intersection problem}, 
      author={Ansis Rosmanis},
      year={2026},
      eprint={2601.15171},
      archivePrefix={arXiv},
      journal = {arXiv preprint arXiv:2601.15171},
      url={https://arxiv.org/abs/2601.15171}, 
}

@article{Prange,
author={Eugene Prange},
doi={10.1109/TIT.1962.1057777},
title={The use of information sets in decoding cyclic codes},
journal={IRE Trans. Inf. Th.}, volume=8, pages={S5–S9}, year=1962, nolink = {}}

@article{chailloux2025opixsoftdecoders,
      title={{OPI x soft decoders}}, 
      author={André Chailloux},
      year={2025},
      eprint={2511.22691},
      archivePrefix={arXiv},
      journal = {arXiv preprint arXiv:2511.22691},
      url={https://arxiv.org/abs/2511.22691}, 
}

@inproceedings{Raghavendra2005,
author = {Raghavendra, Prasad},
title = {Optimal algorithms and inapproximability results for every CSP?},
year = {2008},
isbn = {9781605580470},
publisher = {Association for Computing Machinery},
address = {New York, NY, USA},
url = {https://doi.org/10.1145/1374376.1374414},
doi = {10.1145/1374376.1374414},
booktitle = {Proceedings of the Fortieth Annual ACM Symposium on Theory of Computing},
pages = {245–254},
numpages = {10},
location = {Victoria, British Columbia, Canada},
series = {STOC '08}
}

@inproceedings{Austrin2008,
author = {Austrin, Per and Mossel, Elchanan},
title = {Approximation Resistant Predicates from Pairwise Independence},
year = {2008},
isbn = {9780769531694},
publisher = {IEEE Computer Society},
address = {USA},
url = {https://doi.org/10.1109/CCC.2008.20},
doi = {10.1109/CCC.2008.20},
booktitle = {Proceedings of the 2008 IEEE 23rd Annual Conference on Computational Complexity},
pages = {249–258},
numpages = {10},
series = {CCC '08}
}

@article{Chan2016,
author = {Chan, Siu On},
title = {Approximation Resistance from Pairwise-Independent Subgroups},
year = {2016},
issue_date = {September 2016},
publisher = {Association for Computing Machinery},
address = {New York, NY, USA},
volume = {63},
number = {3},
issn = {0004-5411},
url = {https://doi.org/10.1145/2873054},
doi = {10.1145/2873054},
journal = {J. ACM},
month = aug,
articleno = {27},
numpages = {32}
}

@article{Engebretsen2004,
title = {Inapproximability results for equations over finite groups},
journal = {Theoretical Computer Science},
volume = {312},
number = {1},
pages = {17-45},
year = {2004},
note = {Automata, Languages and Programming},
issn = {0304-3975},
doi = {https://doi.org/10.1016/S0304-3975(03)00401-8},
url = {https://www.sciencedirect.com/science/article/pii/S0304397503004018},
author = {Lars Engebretsen and Jonas Holmerin and Alexander Russell}
}

@InProceedings{Butti2025,
  author =	{Butti, Silvia and Larrauri, Alberto and \v{Z}ivn\'{y}, Stanislav},
  title =	{{Optimal inapproximability of promise equations over finite groups}},
  booktitle =	{52nd International Colloquium on Automata, Languages, and Programming (ICALP 2025)},
  pages =	{38:1--38:14},
  series =	{Leibniz International Proceedings in Informatics (LIPIcs)},
  ISBN =	{978-3-95977-372-0},
  ISSN =	{1868-8969},
  year =	{2025},
  volume =	{334},
  editor =	{Censor-Hillel, Keren and Grandoni, Fabrizio and Ouaknine, Jo\"{e}l and Puppis, Gabriele},
  publisher =	{Schloss Dagstuhl -- Leibniz-Zentrum f{\"u}r Informatik},
  address =	{Dagstuhl, Germany},
  url =		{https://drops.dagstuhl.de/entities/document/10.4230/LIPIcs.ICALP.2025.38},
  URN =		{urn:nbn:de:0030-drops-234150},
  doi =		{10.4230/LIPIcs.ICALP.2025.38}
}

@article{AS98,
author = {Arora, Sanjeev and Safra, Shmuel},
title = {{Probabilistic checking of proofs: a new characterization of NP}},
year = {1998},
issue_date = {Jan. 1998},
publisher = {Association for Computing Machinery},
address = {New York, NY, USA},
volume = {45},
number = {1},
issn = {0004-5411},
url = {https://doi.org/10.1145/273865.273901},
doi = {10.1145/273865.273901},
journal = {J. ACM},
month = jan,
pages = {70–122},
numpages = {53}
}

@article{ALMSS98,
author = {Arora, Sanjeev and Lund, Carsten and Motwani, Rajeev and Sudan, Madhu and Szegedy, Mario},
title = {Proof verification and the hardness of approximation problems},
year = {1998},
issue_date = {May 1998},
publisher = {Association for Computing Machinery},
address = {New York, NY, USA},
volume = {45},
number = {3},
issn = {0004-5411},
url = {https://doi.org/10.1145/278298.278306},
doi = {10.1145/278298.278306},
journal = {J. ACM},
month = may,
pages = {501–555},
numpages = {55}
}

@article{
Pirnay2024,
author = {Niklas Pirnay  and Vincent Ulitzsch  and Frederik Wilde  and Jens Eisert  and Jean-Pierre Seifert },
title = {An in-principle super-polynomial quantum advantage for approximating combinatorial optimization problems via computational learning theory},
journal = {Science Adv.},
volume = {10},
number = {11},
pages = {eadj5170},
year = {2024},
doi = {10.1126/sciadv.adj5170},
url= {https://www.science.org/doi/abs/10.1126/sciadv.adj5170}}

@article{MindTheGaps,
      title={Mind the gaps: The fraught road to quantum advantage},
    author={Jens Eisert and John Preskill}, 
      year={2025},
      journal = {arXiv preprint arXiv:2510.19928},
      eprint={2510.19928},
      archivePrefix={arXiv}
}

@article{VastWorld,
  archiveprefix = {arXiv},
  eprint = {2508.05720}, 
  year=2025,
    title={The vast world of quantum advantage},
    journal = {arXiv preprint arXiv:2508.05720},
    Author={Hsin-Yuan Huang and Soonwon Choi and Jarrod R. McClean and John Preskill}}

@article{Yamakawa2024,
   title={Verifiable quantum advantage without structure},
   volume={71},
   ISSN={1557-735X},
   url={http://dx.doi.org/10.1145/3658665},
   DOI={10.1145/3658665},
   number={3},
   journal={J. ACM},
   publisher={Association for Computing Machinery (ACM)},
   author={Yamakawa, Takashi and Zhandry, Mark},
   year={2024},
   month=jun, pages={1–50} }

@article{szegedy2022,
  title         = {Quantum advantage for combinatorial optimization problems, Simplified},
  author        = {Szegedy, Mario},
  year          = {2022},
  journal       = {arXiv preprint arXiv:2212.12572},
  archivePrefix = {arXiv},
  eprint        = {2212.12572},
  url           = {https://arxiv.org/abs/2212.12572},
}

@article{buhrman2025formalframeworkquantumadvantage,
  title         = {Formal Framework for Quantum Advantage},
  author        = {Buhrman, Harry and Galke, Niklas and Meichanetzidis, Konstantinos},
  year          = {2025},
  journal       = {arXiv preprint arXiv:2510.01953},
  archivePrefix = {arXiv},
  eprint        = {2510.01953},
  url           = {https://arxiv.org/abs/2510.01953},
}

@article{Abbas_2024,
   title={Challenges and opportunities in quantum optimization},
   volume={6},
   ISSN={2522-5820},
   url={http://dx.doi.org/10.1038/s42254-024-00770-9},
   DOI={10.1038/s42254-024-00770-9},
   number={12},
   journal={Nature Rev. Phys.},
   publisher={Springer Science and Business Media LLC},
   author={Abbas, Amira and Ambainis, Andris and Augustino, Brandon and Bärtschi, Andreas and Buhrman, Harry and Coffrin, Carleton and Cortiana, Giorgio and Dunjko, Vedran and Egger, Daniel J. and Elmegreen, Bruce G. and Franco, Nicola and Fratini, Filippo and Fuller, Bryce and Gacon, Julien and Gonciulea, Constantin and Gribling, Sander and Gupta, Swati and Hadfield, Stuart and Heese, Raoul and Kircher, Gerhard and Kleinert, Thomas and Koch, Thorsten and Korpas, Georgios and Lenk, Steve and Marecek, Jakub and Markov, Vanio and Mazzola, Guglielmo and Mensa, Stefano and Mohseni, Naeimeh and Nannicini, Giacomo and O’Meara, Corey and Tapia, Elena Peña and Pokutta, Sebastian and Proissl, Manuel and Rebentrost, Patrick and Sahin, Emre and Symons, Benjamin C. B. and Tornow, Sabine and Valls, Víctor and Woerner, Stefan and Wolf-Bauwens, Mira L. and Yard, Jon and Yarkoni, Sheir and Zechiel, Dirk and Zhuk, Sergiy and Zoufal, Christa},
   year={2024},
   month=oct, pages={718–735} }

@inproceedings{Khot2002UGC, author = {Khot, Subhash}, title = {On the power of unique 2-prover 1-round games}, year = {2002}, isbn = {1581134959}, publisher = {Association for Computing Machinery}, address = {New York, NY, USA}, url = {https://doi.org/10.1145/509907.510017}, doi = {10.1145/509907.510017}, booktitle = {Proceedings of the Thiry-Fourth Annual ACM Symposium on Theory of Computing}, pages = {767–775}, numpages = {9}, location = {Montreal, Quebec, Canada}, series = {STOC '02} }

@article{babbush2025grandchallengequantumapplications,
      title={The Grand Challenge of Quantum Applications}, 
      author={Ryan Babbush and Robbie King and Sergio Boixo and William Huggins and Tanuj Khattar and Guang Hao Low and Jarrod R. McClean and Thomas O'Brien and Nicholas C. Rubin},
      year={2025},
      eprint={2511.09124},
      archivePrefix={arXiv},
      journal = {arXiv preprint arXiv:2511.09124},
      url={https://arxiv.org/abs/2511.09124}, 
}

@article{RS_nphard,
  author={Guruswami, Venkatesan and Vardy, Alexander},
  journal={IEEE Trans. Inf. Th.}, 
  title={{Maximum-likelihood decoding of Reed-Solomon codes is NP-hard}}, 
  year={2005},
  volume={51},
  number={7},
  pages={2249-2256},
  doi={10.1109/TIT.2005.850102}}
